%% file: Template.tex
\documentclass{article}
\usepackage{spconf,amsmath,graphicx,hyperref}
\usepackage{cite}
\usepackage{amssymb,amsfonts}
\usepackage{algorithmic}
\usepackage{graphicx}
\usepackage{textcomp}
\usepackage[table]{xcolor}

\usepackage{booktabs}
\usepackage{siunitx}
\usepackage{mathtools}
\usepackage{comment}
\usepackage{multirow}
\usepackage{makecell}
\usepackage{arydshln}   
\usepackage{enumitem}
\usepackage{amsthm}
\usepackage[ruled,linesnumbered,vlined]{algorithm2e}
\usepackage{tabularx}   
\usepackage{array}  
\usepackage{nicematrix}

\usepackage{calc}
\newcommand{\centerinwidthof}[2]{%
  \makebox[\widthof{#1}][c]{#2}%
}

\usepackage{tikz}
\usetikzlibrary{calc}
\usetikzlibrary{patterns,patterns.meta}
\usetikzlibrary{arrows.meta}
\usepackage[outline]{contour}
\usetikzlibrary{shapes.callouts}
\usetikzlibrary{decorations.pathreplacing}
\contourlength{1.2pt}

\usepackage[subrefformat=parens]{subcaption}

\usepackage{xspace}
\makeatletter
\DeclareRobustCommand\onedot{\futurelet\@let@token\@onedot}
\def\@onedot{\ifx\@let@token.\else.\null\fi\xspace}

\makeatother

\renewcommand{\sectionautorefname}{Section}
\renewcommand{\subsectionautorefname}{\sectionautorefname}
\renewcommand{\subsubsectionautorefname}{\sectionautorefname}

\newtheorem{proposition}{Proposition}

\let\orgautoref\autoref

\renewcommand{\autoref}[1]
{%
\def\figureautorefname{Fig.}%
\def\subfigureautorefname{\figureautorefname}%
\def\equationautorefname~##1\null{(##1\null)}%
\def\sectionautorefname{Sec.}%
\def\subsectionautorefname{\sectionautorefname}%
\def\subsubsectionautorefname{\sectionautorefname}%
\orgautoref{#1}%
}

\newcommand{\abs}[1]{\left\lvert#1\right\rvert}

\def\appendixautorefname~#1\null{~#1 \null}

\DeclareFontEncoding{LS1}{}{}
\DeclareFontSubstitution{LS1}{stix}{m}{n}
\DeclareSymbolFont{stixletters}{LS1}{stix}{m}{it}
\DeclareMathAccent{\anticausal}{\mathord}{stixletters}{"91}
\DeclareMathAccent{\causal}{\mathord}{stixletters}{"92}
\DeclareMathAccent{\bidirectional}{\mathord}{stixletters}{"95}
\DeclareMathAccent{\noncausal}{\mathord}{stixletters}{"95}

\makeatletter
\newcommand{\figcaption}[1]{\def\@captype{figure}\caption{#1}}
\newcommand{\tblcaption}[1]{\def\@captype{table}\caption{#1}}
\makeatother

\makeatletter 
\newcommand{\linebreakand}{%
  \end{@IEEEauthorhalign}
  \hfill\mbox{}\par
  \mbox{}\hfill\begin{@IEEEauthorhalign}
}
\makeatother 

\def\BibTeX{{\rm B\kern-.05em{\sc i\kern-.025em b}\kern-.08em
    T\kern-.1667em\lower.7ex\hbox{E}\kern-.125emX}}

\newcolumntype{s}{>{\scriptsize}r}
\newcolumntype{O}{%
  r
  @{\;\scriptsize(}
  >{\scriptsize}l<{\scriptsize)}%
}
\newcolumntype{D}{%
  r
  @{\;\scriptsize(}
  s
  @{\scriptsize\;\!/\;\!}
  >{\scriptsize}r<{\scriptsize)}%
}
\newcolumntype{T}{%
  r
  @{\;\scriptsize(}
  s
  @{\scriptsize\;\!/\;\!}
  s
  @{\scriptsize\;\!/\;\!}
  >{\scriptsize}r<{\scriptsize)}%
}
\colorlet{oodcolor}{black!12}

\setlist[itemize]{
  topsep=2pt,      
  itemsep=1pt,     
  parsep=0pt,      
  partopsep=0pt    
}

\title{Diarization Error Decomposition Under Pause Annotation Ambiguity}
\name{Shota Horiguchi, Marc Delcroix, Naohiro Tawara, Alexis Plaquet}
\address{NTT, Inc., Japan}
\begin{document}
\ninept
\abovedisplayskip=2pt
\belowdisplayskip=\abovedisplayskip

\setlength\textfloatsep{5pt}
\setlength\dbltextfloatsep{8pt}
\setlength\floatsep{7pt}
\setlength\dblfloatsep{8pt}
\captionsetup[figure]{skip=2pt}
\captionsetup[table]{skip=0.3pt}
\subcaptionsetup[figure]{skip=2pt}
%
\maketitle
\begin{abstract}
Speaker diarization evaluation is sensitive to ambiguity in pause annotation, which can inflate diarization error rate (DER) or obscure genuine model errors.
We show that morphological closing, which has been used for pause-tolerant diarization evaluation, discards segment-level distinctions.
Instead, we propose an exact, overlap-aware decomposition of standard DER into a pause-attributable component, consisting of errors compatible with pause filling, and a residual core component that can serve as a proxy for intrinsic diarization errors.
The decomposition leaves DER unchanged, while the pause-attributable and core components vary monotonically with the pause threshold and eventually saturate.
Experiments spanning synthetic transformations, annotation mismatch, cross-domain evaluation, and tight-boundary diarization show that the decomposition reveals error sources not apparent from standard DER.%
\end{abstract}
\begin{keywords}
speaker diarization, diarization error rate
\end{keywords}
\vspace{-0.4em}
\section{Introduction}\label{sec:introduction}
\vspace{-0.4em}
Speaker diarization is the task of identifying speaker-specific speech activity.
It supports downstream applications including speech separation~\cite{boeddeker2018front} and automatic speech recognition (ASR)~\cite{park2025sortformer,polok2026dicow}, conversation analysis~\cite{veluri2024beyond}, and data generation~\cite{defossez2024moshi,nakata2026duplexchat}.
A major challenge in diarization is annotation.
Fully manual annotation is costly~\cite{ryant2020third}, yet accurate speech-boundary labeling remains difficult~\cite{kraljevski2015comparison}.
This motivates evaluation methods that are robust to annotation ambiguity.

To mitigate the impact of annotation variability on evaluation, collar tolerance has long been used, whereby regions around reference speech boundaries are excluded from scoring~\cite{nist_rt04s}.
However, recent studies suggest that this variability largely stems from how short pauses are treated: ASR corpora favor semantic boundaries, whereas diarization corpora systematically split segments by silence duration~\cite{watanabe2020chime,horiguchi2025can}.
Considerable diversity has also been reported both within annotations of a session and across annotators for the same session~\cite{laurent2026scaling}.
Such pause-annotation inconsistencies can inflate diarization error rate (DER) or obscure genuine model errors when pause-related discrepancies dominate the score, while their importance varies by application, motivating their separate analysis.
Importantly, these effects cannot be eliminated by collar tolerance~\cite{horiguchi2025can}.

The goal of this study is to provide an evaluation method that separates errors attributable to inconsistent pause annotation.
We first show that any metric completely invariant to pause filling necessarily factors through morphological closing of both inputs, discarding detailed segment structure.
We instead decompose DER into pause-attributable components and a residual core component, thereby quantifying errors unexplained by pause inconsistencies (\autoref{fig:overview}).
Unlike closing-based approaches, our method (i) exactly decomposes standard DER even under overlapping speech without changing its value; (ii) yields components that vary monotonically as longer pauses are considered; and (iii) saturates both at finite pause durations.
Experiments across synthetic and real-world settings demonstrate that the proposed method successfully decomposes DER into pause-attributable and core components, improving the interpretability and explainability of diarization performance.
Code is available at \href{https://github.com/nttcslab-sp/der-decomposition}{https://github.com/nttcslab-sp/der-decomposition}.

\begin{figure}[t]
\resizebox{\linewidth}{!}{%
\input{figs/overview}%
}
\caption{Example of pause-aware DER evaluation using the closing baseline and proposed method. MI: missed speech; FA: false alarm.}
\label{fig:overview}
\end{figure}

\vspace{-0.4em}
\section{Related work}
\vspace{-0.4em}
DER~\cite{nist_rt03s} is the standard metric for speaker diarization and is inherently duration-weighted, as described in \autoref{sec:der}.
A collar tolerance is often applied around segment boundaries to mitigate annotation uncertainty~\cite{nist_rt04s}.
Jaccard error rate (JER) reduces speaker-level imbalance by averaging speaker-wise Jaccard errors~\cite{ryant2019second}.
Segment-oriented metrics include a segment-based F-measure~\cite{milner2016segment} and conversational DER based on an intersection-over-union (IoU) threshold~\cite{cheng2022conversational}.
Balanced error rate (BER) further combines segment- and duration-level errors with an adaptive IoU threshold~\cite{liu2022ber}.
However, these metrics require additional hyperparameters or more complex scoring procedures, and do not isolate pause-annotation ambiguity.

Prior work addressed pause-annotation ambiguity using forced alignment and morphological closing~\cite{horiguchi2025can}, but forced alignment requires a cumbersome additional evaluation pipeline, and closing has limitations (see \autoref{sec:closing_limitation}).
Separately, word DER reduces sensitivity to temporal annotation ambiguity by evaluating speaker attribution per word~\cite{shafey2019joint}, but requires ASR output and is not a standalone diarization metric.
To our knowledge, our method is the first to exactly decompose standard DER into pause-attributable and residual core components without modifying either input or the resulting DER.

\vspace{-0.4em}
\section{Diarization error rate}\label{sec:der}
\vspace{-0.4em}
Let $\mathcal{R}(t)$ and $\mathcal{H}(t)$ denote the sets of active reference and hypothesis speakers at time $t$, respectively, under the standard one-to-one speaker mapping.
The standard DER components are defined as
\begin{align}
M(t)&=\max(\abs{\mathcal{R}(t)}-\abs{\mathcal{H}(t)},0),\\
F(t)&=\max(\abs{\mathcal{H}(t)}-\abs{\mathcal{R}(t)},0),\\
C(t)&=\min(\abs{\mathcal{R}(t)},\abs{\mathcal{H}(t)})-\abs{\mathcal{R}(t)\cap\mathcal{H}(t)},
\end{align}
where $M(t)$, $F(t)$, and $C(t)$ denote missed-speech, false-alarm, and speaker-confusion error counts, respectively.
DER is then given by
\begin{equation}
E=\frac{1}{L}\int (M(t)+F(t)+C(t))dt,\label{eq:der}
\end{equation}
where $L\coloneqq \int \abs{\mathcal{R}(t)}dt$ is the total duration of speech in reference.
Its breakdown for missed speech, false alarm, and speaker confusion is also given by $E_\mathrm{MI}=\frac{1}{L}\int M(t)dt$, $E_\mathrm{FA}=\frac{1}{L}\int F(t)dt$, and $E_\mathrm{CF}=\frac{1}{L}\int C(t)dt$, respectively.

\vspace{-0.4em}
\section{Proposed method}
\vspace{-0.4em}

\subsection{Information loss under complete pause invariance}\label{sec:closing_limitation}
Let \(x(t),y(t)\in\{0,1\}\) be speaker-activity sequences, and let $c_\tau$ fill all internal pauses no longer than $\tau$, i.e., morphological closing.

\begin{proposition}[Factorization through closing]
\vspace{-3pt}
Suppose that an evaluation score $D_\tau(\cdot,\cdot)$ is invariant to filling any single internal pause of duration at most $\tau$ in either argument. Then
\begin{equation}
D_\tau(x,y)=D_\tau(c_\tau(x),c_\tau(y)).\label{eq:closing-factorization}
\end{equation}
\end{proposition}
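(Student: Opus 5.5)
The plan is to pass from $x$ to $c_\tau(x)$ by a finite chain of single-pause fillings, apply the invariance hypothesis at each step, and then do the same for $y$.

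First, fix the setting. The activity sequences have finite support (a recording of finite length), so $x$ has finitely many internal pauses, i.e., maximal zero-runs bounded by ones on both sides. Let $P_1,\dots,P_k$ be those internal pauses of $x$ whose duration is at most $\tau$. Closing fills exactly these gaps, so
\[
c_\tau(x)=x+\sum_{i=1}^{k}\mathbf{1}_{P_i}.
\]
Define $x^{(0)}=x$ and $x^{(i)}=x^{(i-1)}+\mathbf{1}_{P_i}$, so that $x^{(k)}=c_\tau(x)$.

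The one point that needs care is that each step must be a legitimate single-pause filling of the \emph{current} sequence. I would check this as follows. Filling $P_i$ only turns zeros inside $P_i$ into ones. It therefore merges the two speech segments adjacent to $P_i$ and does not change any other zero-run. Consequently each remaining $P_j$ with $j>i$ is still a maximal internal zero-run of $x^{(i)}$. It is still flanked by ones on both sides and still has length at most $\tau$. Hence the hypothesis applies at step $i$ and gives
\[
D_\tau(x^{(i-1)},y)=D_\tau(x^{(i)},y).
\]
Telescoping over $i=1,\dots,k$ yields $D_\tau(x,y)=D_\tau(c_\tau(x),y)$.

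Next, repeat the same argument in the second argument with $c_\tau(x)$ held fixed. This gives $D_\tau(c_\tau(x),y)=D_\tau(c_\tau(x),c_\tau(y))$, which is \eqref{eq:closing-factorization}.

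The main obstacle is the bookkeeping in the middle step: showing that the sequence of fillings never creates or destroys another qualifying pause, so that the set of gaps filled by $c_\tau$ is exactly $\{P_i\}$. This holds because fillings only merge segments and leave all other gaps untouched. If the paper's $c_\tau$ were instead the iterated or continuous-time morphological closing with a structuring element of width $\tau$, I would first note that it coincides with gap-filling of internal gaps of length at most $\tau$ and that it is idempotent, and then run the same argument.
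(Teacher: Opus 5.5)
Your proof is correct and follows the paper's argument. The paper applies the single-pause invariance to every internal pause of length at most $\tau$, first in $x$ and then in $y$; your telescoping chain $x^{(0)},\dots,x^{(k)}$, together with your check that each remaining $P_j$ is still a valid pause after the earlier fillings, spells out that one-line proof (both rely on there being finitely many internal pauses, which holds for frame-level sequences of finite length).
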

\begin{proof}
\vspace{-3pt}
Applying the invariance to all pauses of duration at most $\tau$ in $x$ and $y$ yields their closed representations, proving the result.
\end{proof}
\vspace{-3pt}

Hence, any completely pause-invariant score depends only on the closed inputs and discards distinctions removed by $c_\tau$.
In particular, it cannot distinguish a pause-preserving hypothesis from its pause-filled version, nor annotation tightness from indiscriminate filling.
In this paper, we use closing as a baseline, applying $c_\tau$ to the reference, hypothesis, or both before computing DER.

\vspace{-0.4em}
\subsection{Overlap-aware pause-fill decomposition}\label{sec:decomposition}
\vspace{-0.4em}
Given this limitation, rather than making DER invariant to pause filling, we retain standard DER and identify detection errors that admit an explicit one-sided pause-fill explanation.
We decompose the missed-speech and false-alarm errors into pause-attributable and residual core components, leaving speaker confusion unchanged.

\begin{figure}[t]
\subcaptionbox{Pause-fill-compatible hypothesis pause\label{fig:pause_attributable_mi}}{%
    \input{figs/fill_hyp.tex}%
}\hfill
\subcaptionbox{Hypothesis pause not pause-fill-compatible\label{fig:non_pause_attributable_mi}}{%
    \input{figs/not_fill_hyp.tex}%
}\hfill
\subcaptionbox{Pause-fill-compatible reference pause\label{fig:pause_attributable_fa}}{%
    \input{figs/fill_ref.tex}%
}%
\caption{Pause-fill compatibility with pause-duration threshold $\tau$.}
\end{figure}

For each mapped reference--hypothesis speaker pair, define an interval $g=(a,b)$, with start and end times $a$ and $b$, as an internal \emph{hypothesis} pause if the hypothesis speaker is inactive throughout $g$ but active immediately on both sides.
Let $d\coloneqq b-a$ denote its duration and $\tau\geq 0$ the pause threshold, defined as the maximum duration considered for attribution.
We certify $g$ as pause-fill compatible if i) $d\leq\tau$, ii) the \emph{reference} speaker is active throughout $g$, and iii) a single \emph{reference} activity interval crosses both boundaries of $g$ (\autoref{fig:pause_attributable_mi}).
The third condition excludes isolated reference segments that do not bridge the surrounding hypothesis segments (\autoref{fig:non_pause_attributable_mi}).
The same criterion applies symmetrically to reference pauses (\autoref{fig:pause_attributable_fa}).

Let $P_M(t;\tau)$ denote the number of mapped speakers for which $t$ lies inside a pause-fill-compatible hypothesis pause.
These speakers are active in the reference but inactive in the hypothesis, and thus constitute pause-attributable missed-speech candidates.
Similarly, let $P_F(t;\tau)$ denote the number of mapped speakers for which $t$ lies inside a pause-fill-compatible reference pause.
Under the fixed speaker mapping, the instantaneous total error count is $M(t)+F(t)+C(t)=\max(\abs{\mathcal{R}(t)},\abs{\mathcal{H}(t)})-\abs{\mathcal{R}(t)\cap\mathcal{H}(t)}$.
Filling the $P_M(t;\tau)$ hypothesis activities increases both $\abs{\mathcal{H}(t)}$ and $\abs{\mathcal{R}(t)\cap\mathcal{H}(t)}$ by $P_M(t;\tau)$, reducing the total error by exactly $\min(M(t),P_M(t;\tau))$.
If $P_M(t;\tau)>M(t)$, the remaining candidates only change the error breakdown without reducing the total error, as illustrated in \autoref{fig:pause_fill}.
The same holds symmetrically for $P_F(t;\tau)$.
We therefore define
\begin{align}
    M_\mathrm{pause}(t;\tau)&=\min(M(t),P_M(t;\tau)),\\
    F_\mathrm{pause}(t;\tau)&=\min(F(t),P_F(t;\tau)).
\end{align}

\begin{figure}[t]
\input{figs/pause_attributable.tex}%
\caption{Pause-attributable candidates and the resulting pause-attributable components.}\label{fig:pause_fill}
\end{figure}

The pause-attributable error $E_\mathrm{pause}$ and the residual core error $E_\mathrm{core}$ under $\tau$ are then defined as 
\begin{align}
E_\mathrm{pause}(\tau)&=\frac{1}{L}\int (M_\mathrm{pause}(t;\tau)+F_\mathrm{pause}(t;\tau))dt,\\
E_\mathrm{core}(\tau)&=\frac{1}{L}\int (M_\mathrm{core}(t;\tau)+F_\mathrm{core}(t;\tau)+C(t))dt.
\end{align}
where $M_\mathrm{core}(t;\tau)=M(t)-M_\mathrm{pause}(t;\tau)$ and $F_\mathrm{core}(t;\tau)=F(t)-F_\mathrm{pause}(t;\tau)$, respectively.
From the definition of each error,
\begin{equation}
E=E_\mathrm{pause}(\tau)+E_\mathrm{core}(\tau)\label{eq:error_sum}
\end{equation}
always holds for an arbitrary threshold $\tau$, and consequently, the proposed method provides a decomposition of DER.
We use $\tau=\infty$ to denote the maximal pause-fill attribution satisfying conditions ii)--iii), with the duration constraint $d\leq\tau$ removed.

\vspace{-0.4em}
\subsection{Theoretical properties}
\vspace{-0.4em}
Closing is sensitive to its width: one-sided closing can yield very large DER, while two-sided closing can locally increase DER or become overly permissive at large widths.
In contrast, the proposed method has the following two properties.

\vspace{-3pt}
\begin{proposition}[Monotonicity with respect to the pause threshold]\label{theorem}
    For any two thresholds $0\leq\tau_1\leq\tau_2$, $E_{\mathrm{pause}}(\tau_1)\leq E_{\mathrm{pause}}(\tau_2)$ and $E_{\mathrm{core}}(\tau_1)\geq E_{\mathrm{core}}(\tau_2)$ hold.
    Hence, with respect to $\tau$, the pause-attributable error is monotonically nondecreasing, whereas the core error is monotonically nonincreasing.
\end{proposition}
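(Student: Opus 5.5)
The plan is to prove the statement pointwise in $t$ and then integrate. The key observation is that the certification conditions i)--iii) for a pause $g=(a,b)$ split into two kinds. Conditions ii) and iii), together with the notion of an internal pause itself, depend only on the activity sequences of the mapped pair. Only condition i), $d\leq\tau$, depends on the threshold. Moreover, the set of internal hypothesis pauses and internal reference pauses is determined by the inputs alone, under the fixed speaker mapping, which does not depend on $\tau$.

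First I will show that, for $\tau_1\leq\tau_2$, every pause certified as pause-fill compatible under $\tau_1$ is also certified under $\tau_2$. This holds because $d\leq\tau_1$ implies $d\leq\tau_2$, while conditions ii)--iii) are unchanged. The same argument covers $\tau_2=\infty$, where the duration constraint is dropped altogether.

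Next I will convert this into counts. For each mapped speaker and each $t$, the indicator that $t$ lies inside a certified hypothesis pause is nondecreasing in $\tau$. Summing over speakers gives $P_M(t;\tau_1)\leq P_M(t;\tau_2)$, and symmetrically $P_F(t;\tau_1)\leq P_F(t;\tau_2)$. The one point to check is that the internal pauses of a single speaker pair are disjoint, so each speaker contributes at most one certified pause containing $t$ and is counted at most once. Since $M(t)$ and $F(t)$ do not depend on $\tau$ and $\min(M(t),\cdot)$ is nondecreasing, we get $M_\mathrm{pause}(t;\tau_1)\leq M_\mathrm{pause}(t;\tau_2)$ and likewise for $F_\mathrm{pause}$. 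Integrating over $t$ and dividing by $L$, which is independent of $\tau$, yields $E_\mathrm{pause}(\tau_1)\leq E_\mathrm{pause}(\tau_2)$.

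The core inequality then follows from the decomposition \eqref{eq:error_sum}. Since $E$ does not depend on $\tau$, $E_\mathrm{core}(\tau)=E-E_\mathrm{pause}(\tau)$ is nonincreasing. Equivalently, pointwise $M_\mathrm{core}=M-M_\mathrm{pause}$ is nonincreasing and $C(t)$ is fixed. I expect no real obstacle here. The only care needed is to state explicitly that the speaker mapping and the pause sets are $\tau$-independent, and that pauses are open intervals, so boundary points have measure zero and do not affect the integrals.
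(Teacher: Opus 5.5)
Your proposal is correct and follows essentially the same argument as the paper. Only condition i) depends on $\tau$, so the certified pause sets are nested, which makes $P_M$ and $P_F$ nondecreasing; monotonicity of $\min(M(t),\cdot)$ and $\min(F(t),\cdot)$ carries this to the pause-attributable counts, integration bounds $E_\mathrm{pause}$, and the identity $E=E_\mathrm{pause}(\tau)+E_\mathrm{core}(\tau)$ gives the core inequality. Your remarks on the $\tau$-independence of the speaker mapping and of $L$ usefully make explicit what the paper leaves implicit, while the disjointness check is harmless but unnecessary, since $P_M$ already counts speakers and each per-speaker indicator is monotone on its own.
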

\begin{proof}
\vspace{-3pt}
    Since every pause-fill-compatible pause under $\tau_1$ is also pause-fill-compatible under $\tau_2$, $P_M(t;\tau)$ and $P_F(t;\tau)$ are nondecreasing in $\tau$.
    Because $\min(c,x)$ is nondecreasing in $x$ for arbitrary $c$, both $M_{\mathrm{pause}}(t;\tau)$ and $F_{\mathrm{pause}}(t;\tau)$ are nondecreasing in $\tau$.
    Integrating over the scored region gives $E_{\mathrm{pause}}(\tau_1)\leq E_{\mathrm{pause}}(\tau_2)$, and combining this with \autoref{eq:error_sum}, it follows that $E_{\mathrm{core}}(\tau_1)\geq E_{\mathrm{core}}(\tau_2)$.
\end{proof}
\vspace{-3pt}

\begin{proposition}[Finite-threshold saturation]
For any fixed reference and hypothesis with finitely many internal pauses, there exists a finite threshold $\tau_\mathrm{sat}$ such that, for all $\tau\geq\tau_\mathrm{sat}$,
\begin{equation}
E_\mathrm{pause}(\tau)=E_\mathrm{pause}(\infty),\quad
E_\mathrm{core}(\tau)=E_\mathrm{core}(\infty).
\end{equation}
\end{proposition}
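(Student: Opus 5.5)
The plan is to take $\tau_\mathrm{sat}$ to be the longest duration among the finitely many internal pauses that could ever be certified. Let $\mathcal{G}$ be the set of all internal hypothesis pauses and internal reference pauses, over all mapped speaker pairs, that satisfy conditions ii) and iii) of the pause-fill compatibility criterion. These two conditions do not involve $\tau$. By assumption there are finitely many internal pauses, so $\mathcal{G}$ is finite. Define $\tau_\mathrm{sat}\coloneqq\max_{g\in\mathcal{G}} d(g)$, with $\tau_\mathrm{sat}\coloneqq 0$ if $\mathcal{G}$ is empty. This maximum is finite because each pause is a bounded interval with finite duration $d=b-a$.

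The key step is to show that for every $\tau\geq\tau_\mathrm{sat}$ the set of pause-fill-compatible pauses under $\tau$ is exactly $\mathcal{G}$. Compatibility under $\tau$ is membership in $\mathcal{G}$ together with condition i), $d\leq\tau$, and $d(g)\leq\tau_\mathrm{sat}\leq\tau$ holds for every $g\in\mathcal{G}$. By the definition of $\tau=\infty$ in \autoref{sec:decomposition}, the compatible set at $\infty$ is also exactly $\mathcal{G}$. Therefore $P_M(t;\tau)=P_M(t;\infty)$ and $P_F(t;\tau)=P_F(t;\infty)$ for every $t$, since both counts are determined pointwise by which compatible pauses contain $t$.

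It then follows that $M_\mathrm{pause}(t;\tau)=\min(M(t),P_M(t;\tau))$ and $F_\mathrm{pause}(t;\tau)$ agree pointwise with their values at $\infty$, because $M(t)$ and $F(t)$ do not depend on $\tau$. Integrating and dividing by $L$ gives $E_\mathrm{pause}(\tau)=E_\mathrm{pause}(\infty)$. The core identity then follows from \autoref{eq:error_sum}, which holds for arbitrary $\tau$ including $\infty$: $E_\mathrm{core}(\tau)=E-E_\mathrm{pause}(\tau)=E-E_\mathrm{pause}(\infty)=E_\mathrm{core}(\infty)$. Alternatively, one can note directly that $M_\mathrm{core}$ and $F_\mathrm{core}$ agree pointwise and $C$ is independent of $\tau$. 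Monotonicity (Proposition~\ref{theorem}) is not needed, but it is consistent with this argument: the curves are nondecreasing or nonincreasing in $\tau$ and become constant from $\tau_\mathrm{sat}$ on.

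I expect the only delicate point to be checking that the $\tau=\infty$ convention really coincides with the set $\mathcal{G}$ defined by conditions ii) and iii) alone, and that conditions ii) and iii) are genuinely independent of $\tau$. Both hold by definition, since those conditions refer only to the activity patterns of the fixed reference and hypothesis. The finiteness hypothesis is essential, because with infinitely many pauses the supremum of their durations could be unbounded.
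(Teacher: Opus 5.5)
Your proposal is correct and follows the paper's proof: both take $\tau_\mathrm{sat}$ to be the maximum duration over the finitely many internal pauses satisfying the $\tau$-independent conditions ii)--iii), or $0$ if there are none. Both then note that for $\tau\geq\tau_\mathrm{sat}$ the certified set, and hence $P_M(t;\tau)$ and $P_F(t;\tau)$, coincides with the $\tau=\infty$ set, so $E_\mathrm{pause}$ and $E_\mathrm{core}$ no longer change; you simply write out the pointwise and integration steps that the paper compresses.
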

\begin{proof}
\vspace{-3pt}
Only condition i), $d\leq\tau$, depends on $\tau$.
Let $\tau_\mathrm{sat}$ be the maximum duration of the internal pauses satisfying the remaining attribution conditions, or $0$ if none exist.
For any $\tau\geq\tau_\mathrm{sat}$, increasing $\tau$ certifies no additional pauses as pause-fill-compatible.
Thus, $P_M(t;\tau)$ and $P_F(t;\tau)$, and consequently $E_\mathrm{pause}(\tau)$ and $E_\mathrm{core}(\tau)$, remain unchanged.
\end{proof}
\vspace{-3pt}

Together, these propositions show that the pause-attributable component increases monotonically with $\tau$, while the core component decreases monotonically, and both eventually saturate.
Thus, unlike closing, increasing $\tau$ changes only how the standard DER is decomposed, without making the evaluation more permissive.

\vspace{-0.4em}
\subsection{Remarks}
\vspace{-0.4em}
\noindent\textbf{Core error is not a stand-alone metric}:
Core error is intended as a diagnostic rather than a replacement for DER.
Optimizing it alone may encourage degenerate predictions that fill within-speaker pauses, reducing core error while increasing pause-attributable error enough to worsen the overall DER.
When the standard DER remains the primary metric, core error can serve as a proxy for intrinsic diarization errors not explained by pause filling.

\noindent\textbf{Dependence on reference tightness}:
The decomposition facilitates system comparison under the same reference, but does not make core error invariant to reference tightness.
Changing the reference can alter the denominator $L$ in \autoref{eq:der}, speaker mapping, and the set of pause-fill-compatible regions.
Core errors computed from references with different annotation tightness are therefore not directly comparable.

\noindent\textbf{Compatibility with collar tolerance}:
The decomposition can be used unchanged with standard collar tolerance, which excludes fixed regions around reference boundaries~\cite{nist_rt04s}.
However, DERs with different collar settings are not directly comparable because both scored errors and the reference speaker-time denominator change.
Alternatively, one could retain the original scoring region and denominator and treat errors within collar regions as a separate collar-attributable component.
We leave this extension for future work to keep the analysis focused on pause-fill attribution and avoid unnecessary complexity.
None of our experiments use collar tolerance.

\noindent\textbf{Extension to other metrics}:
Our method readily extends to other duration-based metrics such as JER~\cite{ryant2019second}.
For each mapped speaker pair, the same pause-fill attribution criterion is applied to separate pause-attributable and core errors, which are then averaged across speakers, yielding an exact decomposition of JER.

\vspace{-0.4em}
\section{Experiments}
\vspace{-0.4em}
\subsection{Controlled synthetic validation}
\vspace{-0.4em}

\begin{figure}[t]
\subcaptionbox{Fill\label{fig:fill}}{%
\input{figs/synthetic/complete.tex}%
}\hfill
\subcaptionbox{Insertion}{%
\input{figs/synthetic/partial.tex}%
}\hfill
\subcaptionbox{Shift}{%
\input{figs/synthetic/shift.tex}%
}\hfill
\subcaptionbox{Long fill\label{fig:long_fill}}{%
\input{figs/synthetic/long.tex}%
}%
\caption{Synthetic transformations}\label{fig:synthetic}
\end{figure}

To verify the intended behavior of the proposed method, we begin by computing DERs between the original speech activity labels of the AMI corpus~\cite{carletta2007unleashing} and their synthetically transformed versions.
We apply each of the following transformations speaker-wise (\autoref{fig:synthetic}).
\begin{itemize}[leftmargin=*]
\item \textbf{Fill}: Fill every pause of duration $d\leq d_\mathrm{max}$.
\item \textbf{Insertion}: Activate the middle third of each pause with $d\leq d_\mathrm{max}$.
\item \textbf{Shift}: Shift each pause with $d\leq d_\mathrm{max}$ to the right by $\delta=\frac{1}{2}\min(d,u_\text{left},u_\text{right})$, where $u_\text{left}$ and $u_\text{right}$ are the durations of the adjacent utterances.
\item \textbf{Long fill}: Fill all pauses of duration $d\in(d_\mathrm{max},2d_\mathrm{max}]$.
\end{itemize}
We set $d_\mathrm{max}$, the closing width of the baseline, and $\tau$ to \SI{1}{\second}.
By construction, Fill represents pause-attributable errors; Insertion creates isolated activity rather than bridging the pause, thus representing core errors.
Shift and long fill likewise represent core errors.

\begin{table}[t]
\caption{DER breakdown (\si{\percent}) with synthetic transformations. Reference: original labels; hypothesis: transformed labels. MI: missed speech, FA: false alarm, CF: speaker confusion.}
\label{tbl:synthetic}
\setlength{\tabcolsep}{2pt}
\resizebox{\linewidth}{!}{%
\begin{tabular}{@{}lccccc@{}}
\toprule
&Standard DER&\multicolumn{2}{c}{DER with closing}&\multicolumn{2}{c@{}}{Proposed method}\\\cmidrule(l@{\tabcolsep}r@{\tabcolsep}){2-2}\cmidrule(l@{\tabcolsep}r@{\tabcolsep}){3-4}\cmidrule(l@{\tabcolsep}){5-6}
&&Close ref only&Close ref \& hyp&Core&Pause\\
Transform&\centerinwidthof{0.0}{MI}\,/\,\centerinwidthof{0.0}{FA}\,/\,\centerinwidthof{0.0}{CF}&\centerinwidthof{0.0}{MI}\,/\,\centerinwidthof{0.0}{FA}\,/\,\centerinwidthof{0.0}{CF}&\centerinwidthof{0.0}{MI}\,/\,\centerinwidthof{0.0}{FA}\,/\,\centerinwidthof{0.0}{CF}&\centerinwidthof{0.0}{MI}\,/\,\centerinwidthof{0.0}{FA}\,/\,\centerinwidthof{0.0}{CF}&\centerinwidthof{0.0}{MI}\,/\,\centerinwidthof{0.0}{FA}\\\midrule
(a) Fill&0.0\,/\,0.5\,/\,0.0&0.0\,/\,0.0\,/\,0.0&0.0\,/\,0.0\,/\,0.0&0.0\,/\,0.0\,/\,0.0&0.0\,/\,0.5\\
(b) Insertion&0.0\,/\,0.2\,/\,0.0&0.4\,/\,0.0\,/\,0.0&0.0\,/\,0.0\,/\,0.0&0.0\,/\,0.2\,/\,0.0&0.0\,/\,0.0\\
(c) Shift&0.2\,/\,0.2\,/\,0.0&0.5\,/\,0.0\,/\,0.0&0.0\,/\,0.0\,/\,0.0&0.2\,/\,0.2\,/\,0.0&0.0\,/\,0.0\\
(d) Long fill&0.0\,/\,8.5\,/\,0.0&0.5\,/\,8.4\,/\,0.0&0.0\,/\,8.4\,/\,0.0&0.0\,/\,8.5\,/\,0.0&0.0\,/\,0.0\\
\bottomrule
\end{tabular}%
}
\end{table}

\autoref{tbl:synthetic} shows the results.
Closing changes both DER and its breakdown, and closing both the reference and hypothesis can even remove errors caused by insertion or shift.
In contrast, the proposed method attributes only fill errors to the pause-attributable component, while insertion, shift, and long-fill errors remain in the core component.
In all cases, the two components sum exactly to the standard DER.

\vspace{-0.4em}
\subsection{Loose--tight annotation differences}\label{sec:real1}
\vspace{-0.4em}
\begin{figure}[t]
    \vspace{-0.4em}
    \includegraphics[width=\linewidth]{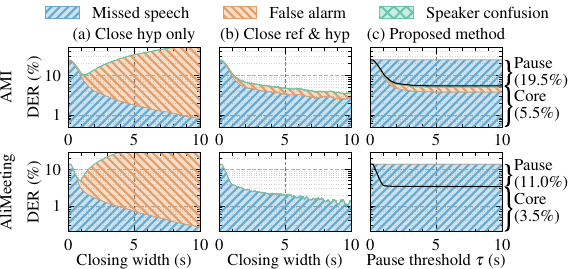}%
    \caption{Pause-aware DERs with the closing baselines and the proposed method. Reference: original loose labels; hypothesis: forced-aligned tight labels. The y-axes use logarithmic scales.}
    \label{fig:loose_vs_tight_annotation}
\end{figure}

We next compare real loose and tight annotations on AMI and AliMeeting (AliM)~\cite{yu2022m2met}, using original loose labels as the DER reference and forced-aligned tight labels as the hypothesis~\cite{horiguchi2025can}.
Most discrepancies are expected to arise from pause filling or boundary errors.
By design, speaker confusion rarely occurs.

\autoref{fig:loose_vs_tight_annotation} compares closing-based approaches with the proposed method.
With one-sided closing applied only to the tight annotation (hypothesis; \hyperref[fig:loose_vs_tight_annotation]{Fig.~\ref*{fig:loose_vs_tight_annotation}(a)}), DER is minimized at corpus-specific widths (\SI{1.24}{\second} for AMI and \SI{0.90}{\second} for AliM).
It reflects a trade-off between reduced misses and increased false alarms, making a common width difficult to choose.
Closing both reference and hypothesis (\hyperref[fig:loose_vs_tight_annotation]{Fig.~\ref*{fig:loose_vs_tight_annotation}(b)}) generally reduces DER but not monotonically and eventually yields nearly zero DER (\SI{0.06}{\percent} for AMI and \SI{0.03}{\percent} for AliM), making the width difficult to choose.
In contrast, the proposed core error (\hyperref[fig:loose_vs_tight_annotation]{Fig.~\ref*{fig:loose_vs_tight_annotation}(c)}) decreases monotonically and converges: full saturation occurs at $\tau=\SI{7.66}{\second}$ and $\SI{2.72}{\second}$ for AMI and AliM, while \SI{95}{\percent} saturation is reached at only \SI{2.02}{\second} and \SI{0.91}{\second}.
Thus, most pause-attributable error is captured at much smaller thresholds, making the decomposition relatively insensitive to the exact choice of a large $\tau$.
Because a universal finite threshold is nevertheless difficult to justify, we henceforth use the saturated, parameter-free endpoint $\tau=\infty$.

\vspace{-0.4em}
\subsection{Train--test label tightness mismatch}
\vspace{-0.4em}
To evaluate the proposed method on a real diarization system, we first use a controlled setting in which the same training data are paired with labels of different annotation tightness.
We use EEND-vector clustering~\cite{kinoshita2021integrating} with the DiariZen architecture without structured pruning~\cite{han2026efficient} (WavLM Large + four Conformer encoders), 16-\si{\second} windows with a 1.6-\si{\second} shift, ResNet speaker embeddings~\cite{wang2023wespeaker}, and VBx clustering~\cite{palka2026vbx}.

Following prior work~\cite{horiguchi2025can}, we use compound sets consisting of AMI mixed headset microphones (AMI-Mix) and single distant microphone (AMI-SD), AliM, MSDWild (MSDW)~\cite{liu2022msdwild}, and VoxConverse (VoxC)~\cite{chung2020spot}.
Supervision labels are defined as follows.
\begin{itemize}[leftmargin=*]
\item $\mathcal{D}_\mathrm{mix}$: Mixed annotation tightness, using original loose labels for AMI and AliM, and original tight labels for MSDW and VoxC.
\item $\mathcal{D}_\mathrm{tight}$: Uniformly tight annotations, using forced-aligned labels for AMI and AliM, and original tight labels for MSDW and VoxC.
\end{itemize}

\begin{table}[t]
\centering
\caption{DERs (\%) of systems trained on compound sets with different label tightness. Train--test tightness mismatches are {\setlength{\fboxsep}{1pt}\colorbox{oodcolor}{shaded}}.}
\label{tbl:training_label_tightness}
\setlength{\tabcolsep}{2.5pt}
\resizebox{\linewidth}{!}{%
\begin{NiceTabular}{@{}llTTD@{}}
\CodeBefore
  \rectanglecolor{oodcolor}{4-2}{4-13}
  \rectanglecolor{oodcolor}{6-2}{6-13}
  \rectanglecolor{oodcolor}{8-2}{8-13}
  \rectanglecolor{oodcolor}{10-2}{10-13}
  \rectanglecolor{oodcolor}{12-2}{12-13}
  \rectanglecolor{oodcolor}{14-2}{14-13}
\Body

\toprule
Corpus&Train&\multicolumn{4}{l}{DER \scriptsize{(MI\,/\,FA\,/\,CF)}}&\multicolumn{4}{l}{Core \scriptsize{(MI\,/\,FA\,/\,CF)}}&\multicolumn{3}{l@{}}{Pause \scriptsize{(MI\,/\,FA)}}\\\midrule
\multicolumn{13}{@{}l@{}}{\textbf{Reference: Original (loose) label}}\\
AMI-Mix~\cite{carletta2007unleashing}&$\mathcal{D}_\mathrm{mix}$&11.9&6.6&3.0&2.2&7.5&3.2&2.1&2.2&4.4&3.4&1.0\\
&$\mathcal{D}_\mathrm{tight}$&27.0&24.1&1.1&1.8&7.9&4.9&1.1&1.8&19.1&19.1&0.0\\
AMI-SD~\cite{carletta2007unleashing}&$\mathcal{D}_\mathrm{mix}$&15.1&8.5&3.3&3.3&10.1&4.5&2.3&3.3&4.9&4.0&1.0\\
&$\mathcal{D}_\mathrm{tight}$&28.9&24.9&1.4&2.7&10.2&6.1&1.4&2.7&18.7&18.7&0.0\\
AliM~\cite{yu2022m2met}&$\mathcal{D}_\mathrm{mix}$&14.3&8.3&2.5&3.5&11.6&6.0&2.0&3.5&2.7&2.2&0.5\\
&$\mathcal{D}_\mathrm{tight}$&22.9&18.5&0.9&3.5&12.8&8.5&0.8&3.5&10.0&10.0&0.0\\\midrule
\multicolumn{13}{@{}l@{}}{\textbf{Reference: Forced-aligned (tight) label}}\\
AMI-Mix~\cite{carletta2007unleashing}&$\mathcal{D}_\mathrm{mix}$&31.9&2.6&26.3&3.0&11.3&2.4&6.0&3.0&20.6&0.2&20.4\\
&$\mathcal{D}_\mathrm{tight}$&11.1&4.8&3.7&2.5&9.8&4.0&3.3&2.5&1.3&0.8&0.5\\
AMI-SD~\cite{carletta2007unleashing}&$\mathcal{D}_\mathrm{mix}$&32.6&3.3&25.1&4.2&13.5&3.1&18.9&4.2&19.2&0.2&18.9\\
&$\mathcal{D}_\mathrm{tight}$&13.6&5.9&4.0&3.7&12.2&5.0&3.5&3.7&1.4&0.9&0.5\\
AliM~\cite{yu2022m2met}&$\mathcal{D}_\mathrm{mix}$&23.9&4.5&14.7&4.6&14.6&4.2&5.8&4.6&9.3&0.4&8.9\\
&$\mathcal{D}_\mathrm{tight}$&15.8&7.4&3.7&4.7&14.3&6.5&3.2&4.7&1.5&0.9&0.5\\\midrule
\multicolumn{13}{@{}l@{}}{\textbf{Reference: Original (tight) label}}\\
MSDW~\cite{liu2022msdwild}&$\mathcal{D}_\mathrm{mix}$&16.7&6.3&4.8&5.6&14.5&5.1&3.7&5.6&2.2&1.2&1.1\\
&$\mathcal{D}_\mathrm{tight}$&17.0&7.2&4.1&5.8&14.7&5.6&3.3&5.8&2.3&1.5&0.8\\
VoxC~\cite{chung2020spot}&$\mathcal{D}_\mathrm{mix}$&8.9&2.9&3.8&2.3&6.6&1.4&2.9&2.3&2.3&1.4&0.9\\
&$\mathcal{D}_\mathrm{tight}$&9.4&3.8&3.0&2.5&6.7&1.6&2.5&2.5&2.6&2.1&0.5\\
\bottomrule
\end{NiceTabular}%
}
\end{table}

\autoref{tbl:training_label_tightness} shows that the DiariZen-based system also reproduces corpus-specific annotation tightness and incurs large DER increases under train--test mismatch, confirming the effect reported in \cite{horiguchi2025can}.
Whereas \cite{horiguchi2025can} analyzed this effect mainly through missed-speech and false-alarm changes and DER after closing, our decomposition preserves the standard DER and shows quantitatively that most of the mismatch-induced increase is pause-attributable, providing a more interpretable account of the mismatch.

\vspace{-0.4em}
\subsection{In-domain vs. out-of-domain evaluation}
\vspace{-0.4em}
We further evaluate whether the proposed decomposition helps interpret out-of-domain degradation using three open-source systems: pyannote 2.1~\cite{bredin2023pyannote}, pyannote 3.1~\cite{plaquet2023powerset}, and DiariZen~\cite{han2026efficient}.
Here, ``out-of-domain'' denotes a corpus not represented in the training data of the corresponding system.
Roughly, pyannote 3.1 improves over 2.1 with a powerset objective, while DiariZen achieves higher accuracy with a larger-scale architecture and self-supervised learning.
Since the three models are trained on different corpora, their comparison is not strictly apples-to-apples.
We generally use official test splits; for LibriCSS~\cite{chen2020continuous}, we use artificial mixtures of the clean source utterances (LibriCSS-Mix) and the official setup of single-channel real recordings (LibriCSS-SD).

\begin{table}[t]
\caption{DERs (\si{\percent}) of open-source systems. Reference: original labels. Results on out-of-domain corpora are {\setlength{\fboxsep}{1pt}\colorbox{oodcolor}{shaded}}.}\label{tbl:results_oss}
\renewrobustcmd{\bfseries}{\fontseries{b}\selectfont}
\renewrobustcmd{\boldmath}{}
\newrobustcmd{\B}{\bfseries}
\setlength{\tabcolsep}{1.95pt}
\centering
\resizebox{\linewidth}{!}{%
\begin{NiceTabular}{@{}lrrrrrrrrr}
\CodeBefore
  \rectanglecolor{oodcolor}{3-8}{3-10}
  \rectanglecolor{oodcolor}{4-2}{5-4}
  \rectanglecolor{oodcolor}{8-2}{8-4}
  \rectanglecolor{oodcolor}{10-2}{10-7}
  \rectanglecolor{oodcolor}{11-2}{17-10}
\Body

\toprule
&\multicolumn{3}{c}{pyannote 2.1~\cite{bredin2023pyannote}}&\multicolumn{3}{c}{pyannote 3.1~\cite{plaquet2023powerset}}&\multicolumn{3}{c@{}}{DiariZen~\cite{han2026efficient}}\\\cmidrule(l{\tabcolsep}r{\tabcolsep}){2-4}\cmidrule(l{\tabcolsep}r{\tabcolsep}){5-7}\cmidrule(l{\tabcolsep}){8-10}
Corpus&DER&Core&Pause&DER&Core&Pause&DER&Core&Pause\\\midrule
AMI-Mix~\cite{carletta2007unleashing}&18.9&14.8&4.1&\B 18.8&14.4&4.4&25.4&\B 9.7&15.7\\
AMI-SD~\cite{carletta2007unleashing}&27.1&21.3&5.9&22.4&17.7&4.7&\B 14.0&\B 9.8&4.2\\
AliMeeting~\cite{yu2022m2met}&27.4&23.9&3.5&24.4&22.2&2.1&\B 12.5&\B 10.4&2.1\\
AISHELL-4~\cite{fu2021aishell}&14.1&11.0&3.1&12.2&10.0&2.2&\B 11.7&\B 8.8&2.9\\
DIHARD-3~\cite{ryant2020third}&26.9&19.2&7.7&21.7&18.7&2.9&\B 14.5&\B 12.4&2.1\\
MSDWild~\cite{liu2022msdwild}&33.1&27.0&6.1&24.9&22.0&3.0&\B 15.6&\B 13.3&2.3\\
VoxConverse~\cite{chung2020spot}&11.2&8.7&2.5&11.3&8.8&2.5&\B 9.2&\B 6.5&2.6\\
NOTSOFAR-1~\cite{vinnikov2024notsofar}&35.5&31.8&3.7&32.2&29.7&2.5&\B 17.9&\B 16.1&1.8\\
DiPCo~\cite{van2020dipco}&41.4&36.8&4.6&\B 35.9&32.0&3.9&37.4&\B 28.1&9.2\\
ICSI~\cite{janin2003icsi}&35.0&25.9&9.1&34.3&25.9&8.4&\B 30.6&\B 24.2&6.4\\
AVA-AVD~\cite{xu2022ava}&64.9&54.3&10.6&48.2&45.3&2.9&\B 42.6&\B 39.4&3.3\\
LibriCSS-Mix~\cite{chen2020continuous}&\B 6.5&5.5&1.0&10.2&6.9&3.3&11.7&\B 4.9&6.8\\
LibriCSS-SD~\cite{chen2020continuous}&13.9&10.6&3.3&13.8&8.6&5.2&\B 12.9&\B 4.3&8.7\\
\bottomrule
\end{NiceTabular}%
}
\end{table}

\autoref{tbl:results_oss} shows that pause-attributable error is generally small in-domain but can be large out-of-domain when annotation tightness mismatches model output.
For example, DiariZen's DER is \SI{25.4}{\percent} on AMI-Mix, but its core error is \SI{9.7}{\percent}, lower than those of the pyannote systems.
This suggests an intrinsic performance advantage obscured by standard DER.
Likewise, cases where DiariZen has a worse DER than pyannote, such as DiPCo~\cite{van2020dipco} and LibriCSS-Mix, can be largely accounted for by increased pause-attributable error.

In contrast, some corpora, including DiPCo (dinner party), ICSI (meetings with up to 10 speakers)~\cite{janin2003icsi}, and AVA-AVD (cinematic content)~\cite{xu2022ava}, exhibited high absolute core error, suggesting substantial domain mismatch with the compound training set.
This distinction could support active learning~\cite{shamsi2023towards}: domains with high absolute core error may be more promising targets for annotation than those dominated by pause-attributable errors.

\vspace{-0.5em}
\subsection{Tight-boundary speaker diarization}
\vspace{-0.4em}

\begin{table}[t]
\centering
\caption{Results of tight-boundary speaker diarization. Reference: forced-aligned tight labels. Results for mismatched train--test label-tightness conditions are {\setlength{\fboxsep}{1pt}\colorbox{oodcolor}{shaded}}.}\label{tbl:results_tight_diarization}
\setlength{\tabcolsep}{2pt}
\resizebox{\linewidth}{!}{%
\begin{NiceTabular}{@{}llTTD@{}}
\CodeBefore
  \rectanglecolor{oodcolor}{2-2}{2-13}
  \rectanglecolor{oodcolor}{6-2}{6-13}
  \rectanglecolor{oodcolor}{10-2}{10-13}
\Body
\toprule
Corpus&Train&\multicolumn{4}{l}{DER\,\scriptsize{(MI\;\!/\;\!FA\;\!/\;\!CF)}}&\multicolumn{4}{l}{Core\,\scriptsize{(MI\;\!/\;\!FA\;\!/\;\!CF)}}&\multicolumn{3}{l@{}}{Pause\,\scriptsize{(MI\;\!/\;\!FA)}}\\\midrule
AMI-Mix&Loose (baseline)&34.0&3.1&26.0&4.7&14.7&3.0&6.9&4.9&19.3&0.1&19.1\\
\cite{carletta2007unleashing}&Tight (topline)&14.0&5.9&3.9&4.3&12.5&5.1&3.1&4.3&1.6&0.8&0.8\\
&Pseudo-tight~\cite{horiguchi2026tight}&18.3&7.8&5.8&4.7&15.7&5.9&5.1&4.7&2.6&1.9&0.7\\
&Pseudo-tight~\cite{horiguchi2026over}&16.4&6.7&5.6&4.1&14.0&5.0&4.9&4.1&2.4&1.7&0.7\\\midrule
AMI-SD&Loose (baseline)&36.3&4.0&25.5&6.8&18.1&3.8&7.5&6.8&18.2&0.2&18.1\\
\cite{carletta2007unleashing}&Tight (topline)&17.1&7.7&4.3&5.0&15.3&6.7&3.6&5.0&1.8&1.1&0.7\\
&Pseudo-tight~\cite{horiguchi2026tight}&21.6&9.7&6.0&5.9&18.9&7.6&5.4&5.9&2.7&2.1&0.6\\
&Pseudo-tight~\cite{horiguchi2026over}&19.6&8.6&5.7&5.3&17.1&6.6&5.2&5.3&2.5&2.0&0.5\\\midrule
AliM&Loose (baseline)&26.5&4.5&14.2&7.8&18.1&4.3&5.9&7.8&8.5&0.2&8.2\\
\cite{yu2022m2met}&Tight (topline)&19.8&8.0&4.8&7.1&18.1&7.1&3.9&7.1&1.7&0.9&0.9\\
&Pseudo-tight~\cite{horiguchi2026tight}&22.9&\multicolumn{3}{@{}>{\scriptsize}l<{\scriptsize)}}{11.2\;\!/\;\!4.9\;\!/\;\!6.8}&20.5&9.5&4.2&6.8&2.4&1.7&0.7\\
&Pseudo-tight~\cite{horiguchi2026over}&21.4&6.7&8.4&6.3&18.2&6.0&5.9&6.3&3.2&0.7&2.5\\
\bottomrule
\end{NiceTabular}%
}
\end{table}

Finally, we analyze tight-boundary diarization, which predicts tight boundaries from loose supervision.
As in prior studies~\cite{horiguchi2026tight,horiguchi2026over}, we train a ReDimNet-B2~\cite{yakovlev2024reshape} followed by an LSTM on $\mathcal{D}_\mathrm{mix}$.
The initial pseudo-labeling method substantially improved DER but yielded limited ASR gains~\cite{horiguchi2026tight}, whereas the enhanced version achieved a modest further DER improvement but much larger ASR gains~\cite{horiguchi2026over}.

\autoref{tbl:results_tight_diarization} shows the results, with the topline trained on $\mathcal{D}_\mathrm{tight}$.
The initial method brings DER closer to the topline but increases core error, which may help explain its limited ASR improvement (e.g., word error rate from \SI{24.32}{\percent} to \SI{23.24}{\percent} for single-channel ASR on AMI-SD~\cite{horiguchi2026over}).
In contrast, the enhanced version reduces core error to a level comparable to or below the loose baseline, consistent with its larger ASR gain (a word error rate of \SI{21.47}{\percent}~\cite{horiguchi2026over}).

\vspace{-0.5em}
\section{Conclusion}
\vspace{-0.5em}
We proposed an exact, overlap-aware decomposition of DER into pause-attributable and residual core components.
The decomposition guarantees monotonicity and finite-threshold saturation.
Analyses spanning synthetic transformations and four real-world settings show how it exposes error sources obscured by standard DER.

\bibliographystyle{IEEEbib-abbrev}
\bibliography{mybib}

\end{document}

%% file: figs/overview.tex
\begin{tikzpicture}[semithick,auto,
label/.style={
    draw=none,
    align=center,
    font=\footnotesize,
    inner sep=0,
    outer sep=0
},
]%

\definecolor{spkblue}{HTML}{0080B1}
\definecolor{spkred}{HTML}{E4002B}
\definecolor{spkgreen}{HTML}{06C755}
\definecolor{spkpurple}{HTML}{6E3B89}

\newcommand{\drawwaveform}[3]{%
  \foreach \i in {#1,...,#2} {
    \pgfmathsetmacro{\x}{0.07*\i}

    \pgfmathsetmacro{\envelope}{
      0.8
      + 0.1*abs(sin(2.3*\i))
      + 0.1*abs(sin(0.7*\i))
    }

    \pgfmathsetmacro{\h}{
      #3*\envelope*(0.1 + 0.15*rnd)
    }

    \draw[
      line width=1.5pt,
      line cap=round,
      color=spkpurple
    ]
      (\x,{-\h}) -- (\x,{\h});
  }%
}

\node[font=\itshape\scriptsize\selectfont,text=black,inner sep=0pt,anchor=base] at ($(1.4em,0.7em)$) {Okay.};
\node[font=\itshape\scriptsize\selectfont,text=black,inner sep=0pt,anchor=base] at ($(4.5em,0.7em)$) {I think};
\node[font=\itshape\scriptsize\selectfont,text=black,inner sep=0pt,anchor=base] at ($(9.2em,0.7em)$) {we should go.};

\pgfmathsetseed{2026}
\drawwaveform{0}{50}{0.05}
\drawwaveform{3}{10}{0.5}
\drawwaveform{16}{25}{0.5}
\drawwaveform{35}{47}{0.5}

\draw[black!40](-6.3em,-2.6em) -- (15.5em,-2.6em);
\draw[black!40](-6.3em,-4.6em) -- (15.5em,-4.6em);
\draw[black!40](-6.3em,-6.6em) -- (15.5em,-6.6em);

\path let \p1 = ($(0.4em,-2.1em)$),
          \p2 = ($(2.5em,-1.3em)$) in
    [draw=black,pattern={Lines[angle=45,distance=2pt]},pattern color=spkblue] (\p1) rectangle (\p2);
\path let \p1 = ($(3.3em,-2.1em)$),
          \p2 = ($(10.7em,-1.3em)$) in
    [draw=black,pattern={Lines[angle=45,distance=2pt]},pattern color=spkblue] (\p1) rectangle (\p2);

\node[font=\footnotesize,text=black,inner sep=0pt,anchor=base west] at ($(-6.3em,-1.9em)$) {Loose reference};
\node[font=\tiny\selectfont,text=black,inner sep=0pt,anchor=base,align=left] at ($(0.4em,-1.15em)$) {0.2};
\node[font=\tiny\selectfont,text=black,inner sep=0pt,anchor=base,align=left] at ($(2.5em,-1.15em)$) {1.1};
\node[font=\tiny\selectfont,text=black,inner sep=0pt,anchor=base,align=left] at ($(3.3em,-1.15em)$) {1.4};
\node[font=\tiny\selectfont,text=black,inner sep=0pt,anchor=base,align=left] at ($(10.7em,-1.15em)$) {4.5};

\path let \p1 = ($(0.6em,-4.2em)$),
          \p2 = ($(2.3em,-3.4em)$) in
    [draw=black,pattern={Lines[angle=-45,distance=2pt]},pattern color=spkred] (\p1) rectangle (\p2);
\path let \p1 = ($(3.5em,-4.2em)$),
          \p2 = ($(5.6em,-3.4em)$) in
    [draw=black,pattern={Lines[angle=-45,distance=2pt]},pattern color=spkred] (\p1) rectangle (\p2);
\path let \p1 = ($(7.7em,-4.2em)$),
          \p2 = ($(10.5em,-3.4em)$) in
    [draw=black,pattern={Lines[angle=-45,distance=2pt]},pattern color=spkred] (\p1) rectangle (\p2);
\node[font=\footnotesize,text=black,inner sep=0pt,anchor=base west] at ($(-6.3em,-3.9em)$) {Tight hypothesis};
\node[font=\tiny\selectfont,text=black,inner sep=0pt,anchor=base,align=left] at ($(0.6em,-3.25em)$) {0.3};
\node[font=\tiny\selectfont,text=black,inner sep=0pt,anchor=base,align=left] at ($(2.3em,-3.25em)$) {1.0};
\node[font=\tiny\selectfont,text=black,inner sep=0pt,anchor=base,align=left] at ($(3.5em,-3.25em)$) {1.5};
\node[font=\tiny\selectfont,text=black,inner sep=0pt,anchor=base,align=left] at ($(5.6em,-3.25em)$) {2.3};
\node[font=\tiny\selectfont,text=black,inner sep=0pt,anchor=base,align=left] at ($(7.7em,-3.25em)$) {3.3};
\node[font=\tiny\selectfont,text=black,inner sep=0pt,anchor=base,align=left] at ($(10.5em,-3.25em)$) {4.4};

\path let \p1 = ($(0.6em,-6.2em)$),
          \p2 = ($(5.6em,-5.4em)$) in
    [draw=black,pattern={Lines[angle=-45,distance=2pt]},pattern color=spkred] (\p1) rectangle (\p2);
\path let \p1 = ($(7.7em,-6.2em)$),
          \p2 = ($(10.5em,-5.4em)$) in
    [draw=black,pattern={Lines[angle=-45,distance=2pt]},pattern color=spkred] (\p1) rectangle (\p2);
\node[font=\footnotesize,text=black,inner sep=0pt,anchor=base west] at ($(-6.3em,-5.9em)$) {+ Closing (\SI{0.5}{\second})};
\node[font=\tiny\selectfont,text=black,inner sep=0pt,anchor=base,align=left] at ($(0.6em,-5.25em)$) {0.3};
\node[font=\tiny\selectfont,text=black,inner sep=0pt,anchor=base,align=left] at ($(5.6em,-5.25em)$) {2.3};
\node[font=\tiny\selectfont,text=black,inner sep=0pt,anchor=base,align=left] at ($(7.7em,-5.25em)$) {3.3};
\node[font=\tiny\selectfont,text=black,inner sep=0pt,anchor=base,align=left] at ($(10.5em,-5.25em)$) {4.4};

\path let \p1 = ($(0.6em,-8.2em)$),
          \p2 = ($(10.5em,-7.4em)$) in
    [draw=black,pattern={Lines[angle=-45,distance=2pt]},pattern color=spkred] (\p1) rectangle (\p2);
\node[font=\footnotesize,text=black,inner sep=0pt,anchor=base west] at ($(-6.3em,-7.9em)$) {+ Closing (\SI{1.0}{\second})};
\node[font=\tiny\selectfont,text=black,inner sep=0pt,anchor=base,align=left] at ($(0.6em,-7.25em)$) {0.3};
\node[font=\tiny\selectfont,text=black,inner sep=0pt,anchor=base,align=left] at ($(10.5em,-7.25em)$) {4.4};

\node[font=\footnotesize\setlength{\baselineskip}{7pt},text=black,inner sep=0pt,anchor=base,align=center] at ($(13.5em,-2.1em)$) {\textbf{DER}\\{\scriptsize \textbf{(MI\,/\,FA)}}};
\node[font=\footnotesize\setlength{\baselineskip}{7pt},text=black,inner sep=0pt,anchor=base,align=center,yshift=-0.2em] at ($(13.5em,-4.1em)$) {35\,\%\\{\scriptsize (35\,\%\,/\,0\,\%)}};
\node[font=\footnotesize\setlength{\baselineskip}{7pt},text=black,inner sep=0pt,anchor=base,align=center,yshift=-0.2em] at ($(13.5em,-6.1em)$) {38\,\%\\{\scriptsize (30\,\%\,/\,8\,\%)}};
\node[font=\footnotesize\setlength{\baselineskip}{7pt},text=black,inner sep=0pt,anchor=base,align=center,yshift=-0.2em] at ($(13.5em,-8.1em)$) {13\,\%\\{\scriptsize (5\,\%\,/\,8\,\%)}};

\node[
  font=\footnotesize,
  rectangle,
  draw,
  rounded corners=2.5pt,
  align=center,
  inner sep=2.5pt,
] (bubble) at ($(18.7em,-4.2em)$) {\textbf{Proposed}\\\textbf{method}\\[4pt]Core: 10\,\%\\[-1pt]{\scriptsize (10\,\%\,/\,0\,\%)}\\[4pt]Pause: 25\,\%\\[-1pt]{\scriptsize (25\,\%\,/\,0\,\%)}};

\path[fill=white]
  ($(bubble.west)+(0.8pt,6pt)$)
  -- ($(bubble.west)+(-4pt,4pt)$)
  -- ($(bubble.west)+(0.8pt,2pt)$)
  -- cycle;

\draw[line join=miter]
  ($(bubble.west)+(0.5pt,6pt)$)
  -- ($(bubble.west)+(-4pt,4pt)$)
  -- ($(bubble.west)+(0.5pt,2pt)$);
\end{tikzpicture}%

%% file: figs/fill_hyp.tex
\begin{tikzpicture}[semithick,auto,
label/.style={
    draw=none,
    align=center,
    font=\footnotesize,
    inner sep=0,
    outer sep=0
},
]%
\def\tstart{0em}
\def\pstart{1em}
\def\pend{6.0em}
\def\tend{7.0em}
\definecolor{spkblue}{HTML}{0080B1}
\definecolor{spkred}{HTML}{E4002B}
\definecolor{spkgreen}{HTML}{06C755}

\path let \p1 = (\tstart,0em),
          \p2 = (\tend,-0.6em) in
    [pattern={Lines[angle=45,distance=2pt]},pattern color=spkblue] (\p1) rectangle (\p2);
\draw[black](\tstart,0em) -- (\tend,0em);
\draw[black](\tstart,-0.6em) -- (\tend,-0.6em);
\node[font=\footnotesize,text=black,inner sep=0pt,anchor=base west] at ($(-1.8em,-0.5em)$) {Ref};

\path let \p1 = ($(\tstart,-1.2em)$),
          \p2 = ($(\pstart,-1.8em)$) in
    [pattern={Lines[angle=-45,distance=2pt]},pattern color=spkred] (\p1) rectangle (\p2);
\path let \p1 = ($(\pend,-1.2em)$),
          \p2 = ($(\tend,-1.8em)$) in
    [pattern={Lines[angle=-45,distance=2pt]},pattern color=spkred] (\p1) rectangle (\p2);
\draw[black](\tend,-1.2em) -- (\pend,-1.2em)  -- (\pend,-1.8em) -- (\tend,-1.8em);
\draw[black](\tstart,-1.2em) -- (\pstart,-1.2em)  -- (\pstart,-1.8em) -- (\tstart,-1.8em);
\node[font=\footnotesize,text=black,inner sep=0pt,anchor=base west] at ($(-1.8em,-1.7em)$) {Hyp};

\draw[black,dash pattern=on 2pt off 1pt,line width=0.5pt,color=gray] (\pstart,-2.4em) -- (\pstart,0.6em);
\draw[black,dash pattern=on 2pt off 1pt,line width=0.5pt,color=gray] (\pend,-2.4em) -- (\pend,0.6em);
\draw[<->, >={Latex[length=3pt,width=3pt]}, black, line width=0.4pt]  (\pstart,-2.2em) -- (\pend,-2.2em);
\draw[decorate,decoration={brace, amplitude=5pt}] (\pstart,0.1em) -- (\pend,0.1em);
\node[font=\footnotesize,text=black,inner sep=0pt,anchor=base,xshift=-0.1em] at ($(0.5*\pstart+0.5*\pend,0.9em)$) {Active through $[a,b]$};
\node[font=\footnotesize,text=black,inner sep=0pt,anchor=base] at ($(0.5*\pstart+0.5*\pend,-3.1em)$) {$b-a~(\leq\tau)$};

\end{tikzpicture}%

%% file: figs/not_fill_hyp.tex
\begin{tikzpicture}[semithick,auto,
label/.style={
    draw=none,
    align=center,
    font=\footnotesize,
    inner sep=0,
    outer sep=0
},
]%
\def\tstart{0em}
\def\pstart{1.25em}
\def\pend{6.25em}
\def\tend{7.5em}
\definecolor{spkblue}{HTML}{0080B1}
\definecolor{spkred}{HTML}{E4002B}
\definecolor{spkgreen}{HTML}{06C755}

\def\ext{0.66em}
\path (-\ext,0) -- (\tend+\ext,0);

\path let \p1 = (\pstart,0em),
          \p2 = (\pend,-0.6em) in
    [draw,pattern={Lines[angle=45,distance=2pt]},pattern color=spkblue] (\p1) rectangle (\p2);

\path let \p1 = ($(\tstart,-1.2em)$),
          \p2 = ($(\pstart,-1.8em)$) in
    [pattern={Lines[angle=-45,distance=2pt]},pattern color=spkred] (\p1) rectangle (\p2);
\path let \p1 = ($(\pend,-1.2em)$),
          \p2 = ($(\tend,-1.8em)$) in
    [pattern={Lines[angle=-45,distance=2pt]},pattern color=spkred] (\p1) rectangle (\p2);
\draw[black](\tend,-1.2em) -- (\pend,-1.2em)  -- (\pend,-1.8em) -- (\tend,-1.8em);
\draw[black](\tstart,-1.2em) -- (\pstart,-1.2em)  -- (\pstart,-1.8em) -- (\tstart,-1.8em);

\draw[black,dash pattern=on 2pt off 1pt,line width=0.5pt,color=gray] (\pstart,-2.4em) -- (\pstart,0.6em);
\draw[black,dash pattern=on 2pt off 1pt,line width=0.5pt,color=gray] (\pend,-2.4em) -- (\pend,0.6em);
\draw[<->, >={Latex[length=3pt,width=3pt]}, black, line width=0.4pt]  (\pstart,-2.2em) -- (\pend,-2.2em);
\draw[decorate,decoration={brace, amplitude=5pt}] (\pstart,0.1em) -- (\pend,0.1em);
\node[font=\footnotesize,text=black,inner sep=0pt,anchor=base] at ($(0.5*\pstart+0.5*\pend,0.9em)$) {Active through $[a,b]$};
\node[font=\footnotesize,text=black,inner sep=0pt,anchor=base] at ($(0.5*\pstart+0.5*\pend,-3.1em)$) {$b-a~(\leq\tau)$};

\end{tikzpicture}%

%% file: figs/fill_ref.tex
\begin{tikzpicture}[semithick,auto,
label/.style={
    draw=none,
    align=center,
    font=\footnotesize,
    inner sep=0,
    outer sep=0
},
]%
\def\tstart{0em}
\def\pstart{1.3em}
\def\pend{6.3em}
\def\tend{7.6em}
\def\ext{0.6em}
\path (-\ext,0) -- (\tend+\ext,0);

\definecolor{spkblue}{HTML}{0080B1}
\definecolor{spkred}{HTML}{E4002B}
\definecolor{spkgreen}{HTML}{06C755}

\path let \p1 = ($(\pend,0em)$),
          \p2 = ($(\tend,-0.6em)$) in
    [pattern={Lines[angle=45,distance=2pt]},pattern color=spkblue] (\p1) rectangle (\p2);
\path let \p1 = ($(\tstart,0em)$),
          \p2 = ($(\pstart,-0.6em)$) in
    [pattern={Lines[angle=45,distance=2pt]},pattern color=spkblue] (\p1) rectangle (\p2);
\draw[black](\tend,0em) -- (\pend,0em)  -- (\pend,-0.6em) -- (\tend,-0.6em);
\draw[black](\tstart,0em) -- (\pstart,0em)  -- (\pstart,-0.6em) -- (\tstart,-0.6em);

\path let \p1 = ($(\tstart,-1.2em)$),
          \p2 = ($(\tend,-1.8em)$) in
    [pattern={Lines[angle=-45,distance=2pt]},pattern color=spkred] (\p1) rectangle (\p2);
\draw[black](\tstart,-1.2em) -- (\tend,-1.2em);
\draw[black](\tstart,-1.8em) -- (\tend,-1.8em);

\draw[black,dash pattern=on 2pt off 1pt,line width=0.5pt,color=gray] (\pstart,-2.4em) -- (\pstart,0.6em);
\draw[black,dash pattern=on 2pt off 1pt,line width=0.5pt,color=gray] (\pend,-2.4em) -- (\pend,0.6em);
\draw[<->, >={Latex[length=3pt,width=3pt]}, black, line width=0.4pt]
  (\pstart,0.4em) -- (\pend,0.4em);
\draw[decorate,decoration={brace, amplitude=5pt, mirror}] (\pstart,-1.9em) -- (\pend,-1.9em);
\node[font=\footnotesize,text=black,inner sep=0pt,anchor=base] at ($(0.5*\pstart+0.5*\pend,0.9em)$) {$b-a~(\leq\tau)$};
\node[font=\footnotesize,text=black,inner sep=0pt,anchor=base] at ($(0.5*\pstart+0.5*\pend,-3.1em)$) {Active through $[a,b]$};

\end{tikzpicture}%

%% file: figs/pause_attributable.tex
\begin{tikzpicture}[semithick,auto,
label/.style={
    draw=none,
    align=center,
    font=\footnotesize,
    inner sep=0,
    outer sep=0
},
]%

\def\xcap{-10.3em}
\def\xbase{1.85em}
\definecolor{spkblue}{HTML}{0080B1}
\definecolor{spkred}{HTML}{E4002B}
\definecolor{spkgreen}{HTML}{06C755}

\newcommand{\error}[5]{%
  \path let
      \p1 = (#1,#2+0.5pt),
      \p2 = (#3,#4-0.5pt)
  in
      node[
          minimum width={\x2-\x1},
          minimum height={\y1-\y2},
          inner sep=0pt,
          pattern={
              Hatch[
                angle=45,
                distance=2pt/sqrt(2),
                line width=0.4pt
              ]
          },
          pattern color=black!40,
          draw=none
      ] at ({(\x1+\x2)/2},{(\y1+\y2)/2}) {}

      node[
          font=\fontsize{6pt}{7pt}\selectfont,
          text=black,
          inner sep=0pt,
          align=center
      ] at ({(\x1+\x2)/2},{(\y1+\y2)/2})
      {\contour{white}{#5}}

      node[
          minimum width={\x2-\x1},
          minimum height={\y1-\y2},
          inner sep=0pt,
          draw=black,
          line width=0.5pt
      ] at ({(\x1+\x2)/2},{(\y1+\y2)/2}) {};
}
\def\ystart{0.6em}
\def\yend{-12.0em}
\draw[black!20,dash pattern=on 2pt off 1pt,line width=0.5pt] (0em,\yend) -- (0em,\ystart);
\draw[black!20,dash pattern=on 2pt off 1pt,line width=0.5pt] (\xbase,\yend) -- (\xbase,\ystart);
\draw[black!20,dash pattern=on 2pt off 1pt,line width=0.5pt] (2*\xbase,\yend) -- (2*\xbase,\ystart);
\draw[black!20,dash pattern=on 2pt off 1pt,line width=0.5pt] (3*\xbase,\yend) -- (3*\xbase,\ystart);
\draw[black!20,dash pattern=on 2pt off 1pt,line width=0.5pt] (4*\xbase,\yend) -- (4*\xbase,\ystart);
\draw[black!20,dash pattern=on 2pt off 1pt,line width=0.5pt] (5*\xbase,\yend) -- (5*\xbase,\ystart);
\draw[black!20,dash pattern=on 2pt off 1pt,line width=0.5pt] (6*\xbase,\yend) -- (6*\xbase,\ystart);
\draw[black!20,dash pattern=on 2pt off 1pt,line width=0.5pt] (7*\xbase,\yend) -- (7*\xbase,\ystart);
\draw[black!20,dash pattern=on 2pt off 1pt,line width=0.5pt] (8*\xbase,\yend) -- (8*\xbase,\ystart);
\draw[black!20,dash pattern=on 2pt off 1pt,line width=0.5pt] (9*\xbase,\yend) -- (9*\xbase,\ystart);

\node[font=\footnotesize,text=black,inner sep=0pt,anchor=base west] at (\xcap,-0.85em) {Reference};
\node[font=\scriptsize,text=black,inner sep=0pt,anchor=base east] at ($(-0.3em,-0.5em)$) {Speaker 1};
\node[font=\scriptsize,text=black,inner sep=0pt,anchor=base east] at ($(-0.3em,-1.2em)$) {Speaker 2};
\path let \p1 = ($(0em,0em)$),
          \p2 = ($(5*\xbase,-0.6em)$) in
    [pattern={Lines[angle=45,distance=2pt]},draw,pattern color=spkblue] (\p1) rectangle (\p2);
\path let \p1 = ($(5*\xbase,0em)$),
          \p2 = ($(7*\xbase,-0.6em)$) in
    [pattern={Dots[distance=2.0pt, radius=0.4pt]},pattern color=spkblue] (\p1) rectangle (\p2);
\path let \p1 = ($(6*\xbase,-0.7em)$),
          \p2 = ($(7*\xbase,-1.3em)$) in
    [draw=black,pattern={Lines[angle=45,distance=2pt]},pattern color=spkblue] (\p1) rectangle (\p2);
\path let \p1 = ($(7*\xbase,0em)$),
          \p2 = ($(8*\xbase,-0.6em)$) in
    [draw,pattern={Lines[angle=45,distance=2pt]},pattern color=spkblue] (\p1) rectangle (\p2);
\draw[decorate,decoration={brace, amplitude=5pt}] (5*\xbase,0.1em) -- (7*\xbase,0.1em);
\node[font=\scriptsize,text=black,inner sep=0pt,anchor=base] at (6*\xbase,1.0em) {Reference pause ($\leq\tau$)};

\node[font=\footnotesize,text=black,inner sep=0pt,anchor=base west] at (\xcap,-3.35em) {Hypothesis};
\node[font=\scriptsize,text=black,inner sep=0pt,anchor=base east] at ($(-0.3em,-3.0em)$) {Speaker 1};
\node[font=\scriptsize,text=black,inner sep=0pt,anchor=base east] at ($(-0.3em,-3.7em)$) {Speaker 2};
\path let \p1 = ($(1*\xbase,-2.5em)$),
          \p2 = ($(2*\xbase,-3.1em)$) in
    [pattern={Lines[angle=-45,distance=2pt]},draw,pattern color=spkred] (\p1) rectangle (\p2);
\path let \p1 = ($(2*\xbase,-3.2em)$),
          \p2 = ($(3*\xbase,-3.8em)$) in
    [draw=black,pattern={Lines[angle=-45,distance=2pt]},pattern color=spkred] (\p1) rectangle (\p2);
\path let \p1 = ($(2*\xbase,-2.5em)$),
          \p2 = ($(4*\xbase,-3.1em)$) in
    [pattern={Dots[distance=2.0pt, radius=0.4pt]},pattern color=spkred] (\p1) rectangle (\p2);
\path let \p1 = ($(4*\xbase,-2.5em)$),
          \p2 = ($(9*\xbase,-3.1em)$) in
    [draw,pattern={Lines[angle=-45,distance=2pt]},pattern color=spkred] (\p1) rectangle (\p2);
\draw[decorate,decoration={brace, amplitude=5pt}] (2*\xbase,-2.4em) -- (4*\xbase,-2.4em);
\node[font=\scriptsize,text=black,inner sep=0pt,anchor=base] at (3*\xbase,-1.5em) {Hypothesis pause ($\leq\tau$)};

\draw[black!40](\xcap,-4.25em) -- (9*\xbase,-4.25em);
\draw[black!40](\xcap,-7.55em) -- (9*\xbase,-7.55em);
\draw[black!40](\xcap,-9.95em) -- (9*\xbase,-9.95em);

\node[font=\footnotesize,text=black,inner sep=0pt,anchor=base west,align=left] at (\xcap,-6.7em) {Instantaneous\\error counts};
\node[font=\scriptsize,text=black,inner sep=0pt,anchor=base east] at ($(-0.2em,-5.2em)$) {$M(t)$};
\node[font=\scriptsize,text=black,inner sep=0pt,anchor=base east] at ($(-0.2em,-6.1em)$) {$F(t)$};
\node[font=\scriptsize,text=black,inner sep=0pt,anchor=base east] at ($(-0.2em,-7.0em)$) {$C(t)$};
\error{0*\xbase}{-4.7em}{1*\xbase}{-5.3em}{$1$}
\error{2*\xbase}{-6.5em}{3*\xbase}{-7.1em}{$1$}
\error{3*\xbase}{-4.7em}{4*\xbase}{-5.3em}{$1$}
\error{5*\xbase}{-5.6em}{6*\xbase}{-6.2em}{$1$}
\error{6*\xbase}{-6.5em}{7*\xbase}{-7.1em}{$1$}
\error{8*\xbase}{-5.6em}{9*\xbase}{-6.2em}{$1$}

\node[font=\footnotesize,text=black,inner sep=0pt,anchor=base west,align=left] at (\xcap,-9.5em) {Pause-attributable\\candidate};
\node[font=\scriptsize,text=black,inner sep=0pt,anchor=base east] at ($(-0.2em,-8.5em)$) {$P_M(t;\tau)$};
\node[font=\scriptsize,text=black,inner sep=0pt,anchor=base east] at ($(-0.2em,-9.4em)$) {$P_F(t;\tau)$};
\error{2*\xbase}{-8em}{4*\xbase}{-8.6em}{$1$}
\error{5*\xbase}{-8.8em}{7*\xbase}{-9.4em}{$1$}

\node[font=\footnotesize,text=black,inner sep=0pt,anchor=base west,align=left] at (\xcap,-11.9em) {Pause-\\attributable};
\node[font=\scriptsize,text=black,inner sep=0pt,anchor=base east] at ($(-0.2em,-10.9em)$) {$M_\mathrm{pause}(t;\tau)$};
\node[font=\scriptsize,text=black,inner sep=0pt,anchor=base east] at ($(-0.2em,-11.8em)$) {$F_\mathrm{pause}(t;\tau)$};
\error{3*\xbase}{-10.4em}{4*\xbase}{-11.0em}{$1$}
\error{5*\xbase}{-11.3em}{6*\xbase}{-11.9em}{$1$}

\end{tikzpicture}%

%% file: figs/synthetic/complete.tex
\begin{tikzpicture}[semithick,auto,
label/.style={
    draw=none,
    align=center,
    font=\footnotesize,
    inner sep=0,
    outer sep=0
},
]%

\definecolor{spkblue}{HTML}{0080B1}
\definecolor{spkred}{HTML}{E4002B}
\definecolor{spkgreen}{HTML}{06C755}

\path let \p1 = ($(3.35em,0em)$),
          \p2 = ($(4.1em,-1em)$) in
    [pattern={Lines[angle=45,distance=2pt]},pattern color=spkblue] (\p1) rectangle (\p2);
\path let \p1 = ($(0em,0em)$),
          \p2 = ($(0.75em,-1em)$) in
    [pattern={Lines[angle=45,distance=2pt]},pattern color=spkblue] (\p1) rectangle (\p2);
\draw[black](4.1em,0em) -- (3.35em,0em)  -- (3.35em,-1em) -- (4.1em,-1em);
\draw[black](0em,0em) -- (0.75em,0em)  -- (0.75em,-1em) -- (0em,-1em);
\node[font=\footnotesize,text=black,inner sep=0pt,anchor=base west] at ($(-1.8em,-0.7em)$) {Ref};

\path let \p1 = ($(0em,-1.5em)$),
          \p2 = ($(4.1em,-2.5em)$) in
    [pattern={Lines[angle=-45,distance=2pt]},pattern color=spkred] (\p1) rectangle (\p2);
\draw[black](0em,-1.5em) -- (4.1em,-1.5em);
\draw[black](0em,-2.5em) -- (4.1em,-2.5em);
\node[font=\footnotesize,text=black,inner sep=0pt,anchor=base west] at ($(-1.8em,-2.3em)$) {\contour{white}{Hyp}};

\draw[black,dash pattern=on 2pt off 1pt,line width=0.5pt,color=gray] (0.75em,1em) -- (0.75em,0em);
\draw[black,dash pattern=on 2pt off 1pt,line width=0.5pt,color=gray] (3.35em,1em) -- (3.35em,0em);
\draw[<->, >={Latex[length=3pt,width=3pt]}, black, line width=0.4pt]
  (0.75em,0.5em) -- node[midway,anchor=base,yshift=5pt,font=\footnotesize, inner sep=0pt] {$d \scriptstyle{\,(\leq d_{\max})}$}
  (3.35em,0.5em);

\end{tikzpicture}%

%% file: figs/synthetic/partial.tex
\begin{tikzpicture}[semithick,auto,
label/.style={
    draw=none,
    align=center,
    font=\footnotesize,
    inner sep=0,
    outer sep=0
},
]%

\definecolor{spkblue}{HTML}{0080B1}
\definecolor{spkred}{HTML}{E4002B}
\definecolor{spkgreen}{HTML}{06C755}

\path let \p1 = ($(4.05em,0em)$),
          \p2 = ($(4.8em,-1em)$) in
    [pattern={Lines[angle=45,distance=2pt]},pattern color=spkblue] (\p1) rectangle (\p2);
\path let \p1 = ($(0em,0em)$),
          \p2 = ($(0.75em,-1em)$) in
    [pattern={Lines[angle=45,distance=2pt]},pattern color=spkblue] (\p1) rectangle (\p2);
\draw[black](4.8em,0em) -- (4.05em,0em)  -- (4.05em,-1em) -- (4.8em,-1em);
\draw[black](0em,0em) -- (0.75em,0em)  -- (0.75em,-1em) -- (0em,-1em);

\path let \p1 = ($(0em,-1.5em)$),
          \p2 = ($(0.75em,-2.5em)$) in
    [pattern={Lines[angle=-45,distance=2pt]},pattern color=spkred] (\p1) rectangle (\p2);
\path let \p1 = ($(1.85em,-1.5em)$),
          \p2 = ($(2.95em,-2.5em)$) in
    [draw=black,pattern={Lines[angle=-45,distance=2pt]},pattern color=spkred] (\p1) rectangle (\p2);
\path let \p1 = ($(4.05em,-1.5em)$),
          \p2 = ($(4.8em,-2.5em)$) in
    [pattern={Lines[angle=-45,distance=2pt]},pattern color=spkred] (\p1) rectangle (\p2);
    
\draw[black](4.8em,-1.5em) -- (4.05em,-1.5em)  -- (4.05em,-2.5em) -- (4.8em,-2.5em);
\draw[black](0em,-1.5em) -- (0.75em,-1.5em)  -- (0.75em,-2.5em) -- (0em,-2.5em);

\draw[<->, >={Latex[length=3pt,width=3pt]}, black, line width=0.4pt]
  (0.75em,0.5em) -- node[midway,anchor=base,yshift=5pt,font=\footnotesize, inner sep=0pt] {$d \scriptstyle{\,(\leq d_\mathrm{max})}$}
  (4.05em,0.5em);
  
\draw[black,dash pattern=on 2pt off 1pt,line width=0.5pt,color=gray] (0.75em,1em) -- (0.75em,0em);
\draw[black,dash pattern=on 2pt off 1pt,line width=0.5pt,color=gray] (4.05em,1em) -- (4.05em,0em);
\draw[<->, >={Latex[length=3pt,width=3pt]}, black, line width=0.4pt]
  (0.75em,-2.1em) -- node[midway,anchor=base,yshift=5pt,font=\footnotesize, inner sep=2pt] {$\frac{d}{3}$}
  (1.85em,-2.1em);
\draw[<->, >={Latex[length=3pt,width=3pt]}, black, line width=0.4pt]
  (2.95em,-2.1em) -- node[midway,anchor=base,yshift=5pt,font=\footnotesize, inner sep=2pt] {$\frac{d}{3}$}
  (4.05em,-2.1em);
  
\end{tikzpicture}%

%% file: figs/synthetic/shift.tex
\begin{tikzpicture}[semithick,auto,
label/.style={
    draw=none,
    align=center,
    font=\footnotesize,
    inner sep=0,
    outer sep=0
},
]%

\definecolor{spkblue}{HTML}{0080B1}
\definecolor{spkred}{HTML}{E4002B}
\definecolor{spkgreen}{HTML}{06C755}

\path let \p1 = ($(-0.2em,0em)$),
          \p2 = ($(1.5em,-1em)$) in
    [pattern={Lines[angle=45,distance=2pt]},pattern color=spkblue] (\p1) rectangle (\p2);
\path let \p1 = ($(5.6em,0em)$),
          \p2 = ($(7.7em,-1em)$) in
    [pattern={Lines[angle=45,distance=2pt]},pattern color=spkblue] (\p1) rectangle (\p2);
\draw[black](-0.2em,0em) -- (1.5em,0em)  -- (1.5em,-1em) -- (-0.2em,-1em);
\draw[black](7.7em,0em) -- (5.6em,0em)  -- (5.6em,-1em) -- (7.7em,-1em);

\path let \p1 = ($(6.6em,-1.5em)$),
          \p2 = ($(7.7em,-2.5em)$) in
    [pattern={Lines[angle=-45,distance=2pt]},pattern color=spkred] (\p1) rectangle (\p2);
\path let \p1 = ($(-0.2em,-1.5em)$),
          \p2 = ($(2.5em,-2.5em)$) in
    [pattern={Lines[angle=-45,distance=2pt]},pattern color=spkred] (\p1) rectangle (\p2);
\draw[black](-0.2em,-1.5em) -- (2.5em,-1.5em)  -- (2.5em,-2.5em) -- (-0.2em,-2.5em);
\draw[black](7.7em,-1.5em) -- (6.6em,-1.5em)  -- (6.6em,-2.5em) -- (7.7em,-2.5em);

\draw[->, >={Latex[length=3pt,width=3pt]},black,line width=0.4pt]
  (-0.2em,0.5em) -- node[midway,anchor=base,yshift=5pt,xshift=-1pt,font=\footnotesize,inner sep=0pt] {$u_\text{left}$}
  (1.5em,0.5em);
\draw[<->, >={Latex[length=3pt,width=3pt]},black, line width=0.4pt]
  (1.5em,0.5em) -- node[midway,anchor=base,yshift=5pt,font=\footnotesize,inner sep=0pt] {$d \scriptstyle{\,(\leq d_\mathrm{max})}$}
  (5.6em,0.5em);
\draw[<-, >={Latex[length=3pt,width=3pt]},black, line width=0.4pt]
  (5.6em,0.5em) -- node[midway,anchor=base,yshift=5pt,xshift=1pt,font=\footnotesize,inner sep=0pt] {$u_\text{right}$}
  (7.7em,0.5em);

\draw[black,dash pattern=on 2pt off 1pt,line width=0.5pt,color=gray] (1.5em,1em) -- (1.5em,0em);
\draw[black,dash pattern=on 2pt off 1pt,line width=0.5pt,color=gray] (5.6em,1em) -- (5.6em,0em);
\draw[black,dash pattern=on 2pt off 1pt,line width=0.5pt,color=gray] (2.5em,-0.5em) -- (2.5em,-1.5em);
\draw[black,dash pattern=on 2pt off 1pt,line width=0.5pt,color=gray] (5.6em,-2.3em) -- (5.6em,-1em);

\draw[<->, >={Latex[length=3pt,width=3pt]}, black, line width=0.4pt]
  (1.5em,-0.8em) -- node[midway,above,font=\footnotesize, inner sep=2pt] {$\delta$}
  (2.5em,-0.8em);
\draw[<->, >={Latex[length=3pt,width=3pt]}, black, line width=0.4pt]
  (5.6em,-2.1em) -- node[midway,above,font=\footnotesize, inner sep=2pt] {$\delta$}
  (6.6em,-2.1em);

\end{tikzpicture}%

%% file: figs/synthetic/long.tex
\begin{tikzpicture}[semithick,auto,
label/.style={
    draw=none,
    align=center,
    font=\footnotesize,
    inner sep=0,
    outer sep=0
},
]%

\definecolor{spkblue}{HTML}{0080B1}
\definecolor{spkred}{HTML}{E4002B}
\definecolor{spkgreen}{HTML}{06C755}

\path let \p1 = ($(5.75em,0em)$),
          \p2 = ($(6.5em,-1em)$) in
    [pattern={Lines[angle=45,distance=2pt]},pattern color=spkblue] (\p1) rectangle (\p2);
\path let \p1 = ($(0em,0em)$),
          \p2 = ($(0.75em,-1em)$) in
    [pattern={Lines[angle=45,distance=2pt]},pattern color=spkblue] (\p1) rectangle (\p2);
\draw[black](6.5em,0em) -- (5.75em,0em)  -- (5.75em,-1em) -- (6.5em,-1em);
\draw[black](0em,0em) -- (0.75em,0em)  -- (0.75em,-1em) -- (0em,-1em);

\path let \p1 = ($(0em,-1.5em)$),
          \p2 = ($(6.5em,-2.5em)$) in
    [pattern={Lines[angle=-45,distance=2pt]},pattern color=spkred] (\p1) rectangle (\p2);
\draw[black](0em,-1.5em) -- (6.5em,-1.5em);
\draw[black](0em,-2.5em) -- (6.5em,-2.5em);

\draw[black,dash pattern=on 2pt off 1pt,line width=0.5pt,color=gray] (0.75em,1em) -- (0.75em,0em);
\draw[black,dash pattern=on 2pt off 1pt,line width=0.5pt,color=gray] (5.75em,1em) -- (5.75em,0em);
\draw[<->, >={Latex[length=3pt,width=3pt]}, black, line width=0.4pt]
  (0.75em,0.5em) -- node[midway,anchor=base,yshift=5pt,font=\footnotesize, inner sep=0pt] {$d \scriptstyle{\,(\in(d_{\max},2d_{\max}])}$}
  (5.75em,0.5em);

\end{tikzpicture}%